\definecolor{webgreen}{rgb}{0,.5,0}
\definecolor{webbrown}{rgb}{.6,0,0}
\DeclareMathOperator{\U}{\mathcal{U}}
\DeclareMathOperator{\lso}{lso}
\DeclareMathOperator{\so}{so}
\newcommand{\M}{M}
\begin{document}

\theoremstyle{plain}
\newtheorem{theorem}{Theorem}
\newtheorem{corollary}[theorem]{Corollary}
\newtheorem{lemma}[theorem]{Lemma}
\newtheorem{proposition}[theorem]{Proposition}

\theoremstyle{definition}
\newtheorem{definition}[theorem]{Definition}
\newtheorem{example}[theorem]{Example}
\newtheorem{conjecture}[theorem]{Conjecture}

\theoremstyle{remark}
\newtheorem{remark}[theorem]{Remark}

\author{
Daniel Gabric
}
%\thanks{Copyright (c) 2017 IEEE. Personal use of this material is permitted.  However, permission to use this material for any other purposes must be obtained from the IEEE by sending a request to pubs-permissions@ieee.org.}
%\IEEEpubid{\begin{minipage}[t]{\textwidth}\ \\[10pt]
%        \centering{Copyright (c) 2017 IEEE. Personal use of this material is permitted.  However, %permission to use this material for any other purposes must be obtained from the IEEE by sending a %request to pubs-permissions@ieee.org.}
%\end{minipage}} 

\title{Mutual Borders and Overlaps}

\maketitle

\begin{abstract}
A word is said to be \emph{bordered} if it contains a non-empty  proper prefix that is also a suffix. We can naturally extend this definition to pairs of non-empty words. A pair of words $(u,v)$ is said to be \emph{mutually bordered} if there exists a word that is a non-empty proper prefix of $u$ and suffix of $v$, and there exists a word that is a non-empty proper suffix of $u$ and  prefix of $v$. In other words, $(u,v)$ is mutually bordered if $u$ overlaps $v$ and $v$ overlaps $u$. We give a recurrence for the number of mutually bordered pairs of words. Furthermore, we show that, asymptotically, there are $c\cdot k^{2n}$ mutually bordered words of length-$n$ over a $k$-letter alphabet, where $c$ is a constant. Finally, we show that the expected shortest overlap between pairs of words is bounded above by a constant.
\end{abstract}

\section{Introduction}

Let $\Sigma_k$ denote the alphabet $\{0,1,\ldots, k-1\}$. Let $\Sigma_k^n$ denote the set of all length-$n$ words over the alphabet $\Sigma_k$. The length of a word $w$ is denoted by $|w|$. A \emph{border} of a word $w \in \Sigma_k^*$ is a non-empty word that is both a proper prefix and suffix of $w$. A word with a border is said to be \emph{bordered}. Otherwise, the word is said to be \emph{unbordered}. For example, the French word {\tt entente} is bordered, and has two borders, namely {\tt ente} and {\tt e}. Unbordered words are also referred to as \emph{bifix-free} words, where a \emph{bifix} is just a border. We choose to use \emph{border} to denote a non-empty prefix that is also a suffix since it was used first~\cite{Silberger:1971, Ehrenfeucht&Silberger:1979}, and it is more widely used in the literature.

Let $\U_n^k$ denote the set of length-$n$ unbordered words over a $k$-letter alphabet. It is well known~\cite{Nielsen:1973} that the sequence $u_n=|\U_n^k|$ is defined by the recurrence  
\[u_n =
\begin{cases} 
      1, & \text{if }n=0;\\
      ku_{n-1} - u_{n/2}, & \text{if $n>0$ is even;}\\
      ku_{n-1}, & \text{if $n$ is odd.}
   \end{cases}
\]
In the same paper by Nielsen, he showed that the limit $\lim\limits_{n\to \infty} u_n/k^n$ exists. In particular, he showed that for $k=2$ there are $(c + o(1))\cdot 2^n$ unbordered binary words, where $c \approx 0.267786$.   Also see~\cite{Holub&Shallit:2016}. 

The notion of a word being unbordered can naturally be generalized to pairs of words. Our goal is to prove results similar to Nielsen's for these kinds of pairs of words. The concept of a word being unbordered can also be generalized to larger collections of words. But counting the number of such tuples (or sets) of words can get very complicated because of all of the different ``interacting" words. Thus we focus on the simplest generalization to multiple words, that is, to pairs of words.

Let $u$ and $v$ be words of length $m$ and $n$, respectively. Let $w$ be a non-empty word. In this paper we write $(u,v)$ to refer to an ordered pair of words. We say that $(u,v)$ has a \emph{right-border} if $u$ has a non-empty proper suffix that is a proper prefix of $v$. If $w$ is a suffix of $u$ and prefix of $v$ then $w$ is said to be a \emph{right-border} of $(u,v)$.  Analogously, we say that $(u,v)$ has a \emph{left-border} if $u$ has a non-empty proper prefix that is a proper suffix of $v$. If $w$ is a prefix of $u$ and suffix of $v$ then $w$ is said to be a \emph{left-border} of $(u,v)$.\footnote{We could have defined left-borders and right-borders to refer to ordinary non-empty prefixes and suffixes without specifying they be proper. But since a border is defined as a non-empty proper prefix and suffix of a word, we decided to keep the definition analogous.}

A pair of words $(u,v)$ is said to be \emph{mutually bordered} if $(u,v)$ has both a right-border and a left-border. If $(u,v)$ has neither a right-border nor a left-border, then $(u,v)$ is said to be \emph{mutually unbordered}. The pair $(u,v)$ is said to be \emph{right-bordered} if $(u,v)$ has a right-border but not a left-border. Similarly $(u,v)$ is said to be \emph{left-bordered} if $(u,v)$ has a left-border but not a right-border.

\begin{example}
The pair of English words ({\tt delivered}, {\tt redeliver}) is mutually bordered. The word {\tt red} is a right-border of the pair and {\tt deliver} is a left-border of the pair.

The pair of English words ({\tt mail}, {\tt box}) is mutually unbordered since it has no right-border or left-border. 

The pair of English words ({\tt overlap}, {\tt lapse}) is right-bordered. The word {\tt lap} is a right-border of the pair. \end{example}

Mutually unbordered words have previously arisen in digital communications as a generalization of a method of frame synchronization~\cite{BajicStojanovic:2004,WijngaardenWillink:2000}. The goal of frame synchronization is to let the receiver of some piece of data know where the boundaries of the frames in the data are (i.e., both the sender and receiver are on the same page). Typically this is done by inserting a specially chosen word periodically into the data stream as a kind of delimiter. 

In 2000 van Wijngaarden and Willink~\cite{WijngaardenWillink:2000} proposed a new method of frame synchronization where a set of different words are interleaved into the data stream periodically instead of appearing as a contiguous subword. An important part of frame synchronization is the detection of the periodically inserted word. In 2004 Bajic and Stojanovic~\cite{BajicStojanovic:2004} calculated statistical quantities related to the detection of distributed sequences in random data. For more work on mutually unbordered words, also called ``cross-bifix-free words" or ``non-overlapping words", see~\cite{Bajic:2014, Barcucci:2021, Bernini:2017,Bilotta:2017,BilottaPergola:2012,Blackburn:2015,Chee:2013,StefanovicBajic:2012}. We choose not to use the terminology ``cross-bifix-free words" or ``non-overlapping words" since we require a specific name for pairs of words $(u,v)$ with the property that $(u,v)$ has either a right-border or a left-border but not both.

\begin{itemize}
    \item Let $\M_k(m,n)$ denote the number of mutually bordered pairs of words $(u,v)$.
    \item Let $R_k(m,n)$ denote the number of right-bordered pairs of words $(u,v)$. 
    \item Let $U_k(m,n)$ denote the number of mutually unbordered pairs of words $(u,v)$. 
\end{itemize}
See Tables~\ref{table:M},~\ref{table:R}, and~\ref{table:U} for sample values of $\M_k(m,n)$, $R_k(m,n)$, and $U_k(m,n)$ for $m$, $n$ where $1\leq m,n\leq 8$.

In this paper, we are primarily concerned with pairs of equal length words (i.e., the case where $m=n$). So we define $M_k(n) = M_k(n,n)$, and we define $R_k(n)$ and $U_k(n)$ similarly (see Table~\ref{table:MRU} for some sample values). The main results of this paper are Corollary~\ref{cor:fourthirds}, Theorem~\ref{theorem:enumeration}, Theorem~\ref{theorem:limiting}, and Theorem~\ref{thm:expected}. In Corollary~\ref{cor:fourthirds} we bound the sum of the length of the shortest left-border and right-border of a pair of words. In Theorem~\ref{theorem:enumeration} we give recurrences for $M_k(n)$, $R_k(n)$, and $U_k(n)$. Then in Theorem~\ref{theorem:limiting} we prove that the limit $\lim\limits_{n\to \infty}\M_k(n)/k^{2n}$ exists. Finally, in Theorem~\ref{thm:expected} we show that the expected shortest right-border and left-border of a pair of equal-length words is $O(1)$.

\newpage
\begin{table}[H]
\caption{Some values of $M_2(m,n)$ for $m$, $n$ where $1 \leq m,n \leq 8$.}
\centering

\begin{tabular}{|c|cccccccc|}
\hline
\backslashbox{$m$}{$n$}& 1  & 2 & 3 & 4 & 5 & 6 & 7 & 8 \\
\hline
1 & 0 &  0  & 0   & 0  &  0  &  0  &   0   &  0\\
2 & 0 & 4 &  8 &  16 &  32 &  64 &  128  & 256\\
3 & 0 & 8 & 26 &  50  &100 & 200  & 400  & 800\\
4 & 0 & 16 & 50  &124 & 242 & 484  & 968 & 1936\\
5 & 0 & 32& 100&  242 & 524& 1036  &2070 & 4142\\
6 & 0 & 64& 200 & 484& 1036& 2154  &4280 & 8554\\
7 & 0 & 128& 400&  968 &2070& 4280 & 8706& 17354\\
8 & 0 & 256& 800 &1936& 4142& 8554& 17354& 34996\\
\hline
\end{tabular}

\captionsetup{justification=centering}
\label{table:M}
\end{table}

\begin{table}[H]
\caption{Some values of $R_2(m,n)$ for $m$, $n$ where $1 \leq m,n \leq 8$.}
\centering
\begin{tabular}{|c|cccccccc|}
\hline
\backslashbox{$m$}{$n$}& 1  & 2 & 3 & 4 & 5 & 6 & 7 & 8  \\
\hline
1 & 0 &  0  & 0   & 0  &  0  &  0  &   0   &  0\\
2 & 0 & 4 &  8 &  16 &  32 &  64 &  128  & 256\\
3 & 0 & 8 & 14 &  30  &60 & 120  & 240  & 480\\
4 & 0 & 16 & 30 & 52 & 110 & 220 & 440 &  880\\
5 & 0 & 32 & 60& 110 & 204 & 420 & 842 & 1682\\
6 & 0 & 64& 120 &220 & 420 & 806& 1640 & 3286\\
7 & 0 & 128& 240& 440 & 842 &1640& 3214 & 6486\\
8 & 0 & 256& 480& 880& 1682& 3286 &6486& 12844\\
\hline
\end{tabular}
\captionsetup{justification=centering}
\label{table:R}
\end{table}

\begin{table}[H]
\caption{Some values of $U_2(m,n)$ for $m$, $n$ where $1 \leq m,n \leq 8$.}
\centering
\begin{tabular}{|c|cccccccc|}
\hline
\backslashbox{$m$}{$n$} & 1  & 2 & 3 & 4 & 5 & 6 & 7 & 8  \\
\hline
1 & 4 &  8 & 16 & 32 & 64 & 128 & 256 & 512  \\
2 & 8 &  4 &  8 & 16 & 32  & 64 & 128 & 256 \\
3 & 16 & 8 & 10 & 18 & 36  & 72 & 144 & 288  \\
4 & 32 & 16 & 18 & 28 & 50 & 100 & 200 & 400  \\
5 & 64 & 32 & 36 & 50 & 92 & 172 & 342 & 686  \\
6 & 128 & 64 & 72 & 100 & 172 & 330 & 632 & 1258 \\
7 & 256 & 128 & 144 & 200 & 342 &  632 & 1250 & 2442  \\
8 & 512 & 256 & 288 & 400 & 686 & 1258 & 2442 & 4852  \\
\hline
\end{tabular}
\captionsetup{justification=centering}
\label{table:U}
\end{table}

\begin{table}[H]
\caption{Some values of $M_2(n)$, $R_2(n)$, and $U_2(n)$ for $n$ where $1 \leq n \leq 15$.}
\centering
\begin{tabular}{c|c|c|c}
$n$ & $M_2(n)$  & $R_2(n)$ & $U_2(n)$  \\
\hline
1 & 0 & 0 & 4 \\
2 & 4 & 4 & 4 \\
3 & 26 & 14 & 10 \\
4 & 124 & 52 & 28 \\
5 & 524 & 204 & 92 \\
6 & 2154 & 806 & 330 \\
7 & 8706 & 3214 & 1250 \\
8 & 34996 & 12844 & 4852 \\
9 & 140290 & 51366 & 19122 \\
10 & 561724 & 205492 & 75868 \\
11 & 2247892 & 822108 & 302196 \\
12 & 8993414 & 3288858 & 1206086 \\
13 & 35976928 & 13156624 & 4818688\\
14 & 143913546 & 52629590 & 19262730\\
15 & 575664422 & 210525818 & 77025766
\end{tabular}
\captionsetup{justification=centering}
\label{table:MRU}
\end{table}

\section{Number of mutually bordered pairs}
In this section we enumerate the number of mutually bordered pairs of words $\M_k(n)$.\begin{itemize}
    \item Let $\so(u,v)$ denote the shortest right-border of $(u,v)$, and let $\so(u,v)=\epsilon$ if $(u,v)$ does not have a right-border. By definition we have that $\so(v,u)$ is the shortest left-border of $(v,u)$, and $\so(v,u)=\epsilon$ if $(u,v)$ does not have a left-border. \item Let $\lso(v,u)$ be the length of the shortest right-border of $(u,v)$, and let $\lso(u,v) = 0$ if $(u,v)$ does not have a right-border. Again we have that by definition $\lso(v,u)$ is the length of the shortest left-border of $(u,v)$, and $\lso(v,u)=0$ if $(u,v)$ does not have a left-border.
\end{itemize}
\begin{example}
The pair of binary words $(u,v) = (1000101, 0110001)$ has one right-border and two left-borders. The word $01$ is a right-border of the pair. The words $1$ and $10001$ are left-borders of the pair. The shortest right-border is $01$ and is of length $2$. The shortest left-border is $1$ and is of length $1$. So $\so(u, v) = 01$, $\lso(u,v) = 2$, $\so(v,u)= 1$, and $\lso(v,u) = 1$.
\end{example}

Since the concept of a pair of words being mutually bordered is similar to the concept of a single word being bordered, it is natural to assume that the insights used to count bordered words might be useful to also count pairs of mutually bordered words. Let $u$ be a bordered word. Let $v$ be the shortest border of $u$. The key ideas used to count length-$n$ bordered words is that for bordered words $u$, the shortest border $v$ is unbordered, and $|v| \leq n/2$. Combining these facts we get the following formula for the number of length-$n$ bordered words over a $k$-letter alphabet:
\[\sum_{i=1}^{\lfloor n/2\rfloor} {u_i}\cdot k^{n-2i}.\]

The basic idea we will use to count mutually bordered pairs of words $(u,v)$ is to, analogously to bounding the length of the shortest border, put a bound on $\lso(u,v) + \lso(v,u)$ (see Corollary~\ref{cor:fourthirds}). Then further classify all pairs $(u,v)$ into two groups based on $\lso(u,v)+\lso(v,u)$. If $\lso(u,v)+\lso(v,u)$ is `small', then we can easily count the number of such pairs merely by using the number of unbordered words (see Lemma~\ref{lemma:smallwords}). If $\lso(u,v)+\lso(v,u)$ is `large', then the pair $(u,v)$ has a certain structure we can exploit to count them (see Lemma~\ref{lemma:bigwords}).

\begin{lemma}\label{lemma:short}
Let $n\geq 1$. Let $u,v\in \Sigma_k^n$. Let $w$ be a non-empty word that is both a proper suffix of $u$ and a proper prefix of $v$. Then $w=\so(u,v)$ iff $w$ is unbordered.
\end{lemma}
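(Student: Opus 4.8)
The plan is to prove both directions of the equivalence by exploiting the minimality of $\so(u,v)$. Recall the setup: $w$ is a non-empty word that is simultaneously a proper suffix of $u$ and a proper prefix of $v$, so $w$ is by definition \emph{a} right-border of $(u,v)$, and $\so(u,v)$ is the shortest such right-border. I would first establish the easier direction, namely that if $w = \so(u,v)$ then $w$ is unbordered, by contradiction. Suppose $w$ has a border $b$, so $b$ is a non-empty proper prefix and suffix of $w$. Since $w$ is a suffix of $u$ and $b$ is a suffix of $w$, transitivity gives that $b$ is a suffix of $u$; similarly, since $w$ is a prefix of $v$ and $b$ is a prefix of $w$, we get that $b$ is a prefix of $v$. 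Moreover $b$ is a \emph{proper} suffix of $u$ and a \emph{proper} prefix of $v$ because $b$ is shorter than $w$, which is itself proper. Hence $b$ is a right-border of $(u,v)$ strictly shorter than $w$, contradicting the minimality of $w = \so(u,v)$.

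For the converse, I would prove the contrapositive: if $w \neq \so(u,v)$, then $w$ is bordered. Since $w$ is a right-border of $(u,v)$ and it is not the shortest one, we have $|\so(u,v)| < |w|$. The plan is to show that $\so(u,v)$ is in fact a border of $w$. Write $s = \so(u,v)$. Both $w$ and $s$ are suffixes of $u$, so the shorter one $s$ is a suffix of the longer one $w$; likewise both $w$ and $s$ are prefixes of $v$, so $s$ is a prefix of $w$. Since $0 < |s| < |w|$, the word $s$ is a non-empty proper prefix and suffix of $w$, i.e.\ a border of $w$, so $w$ is bordered. This completes the contrapositive and hence the converse.

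The two small facts I am leaning on throughout are elementary word-combinatorial observations: if two words are both suffixes (respectively prefixes) of a common word, then the shorter is a suffix (respectively prefix) of the longer. These follow directly from the definitions of prefix and suffix and need only be invoked, not belabored. The only point requiring a little care is the bookkeeping on \emph{properness}: I must check at each step that the inherited prefix/suffix relations are proper in the sense required by the definition of right-border, which is why comparing lengths ($|b| < |w| < |u|,|v|$ and $|s| < |w|$) is the right tool to track.

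I expect the main (though still minor) obstacle to be nothing deep but rather making the length comparisons airtight, especially confirming that $s$ being a suffix of $u$ and $w$ being a suffix of $u$ with $|s| < |w|$ forces $s$ to be a suffix of $w$ and not merely of $u$. A clean way to phrase this, which I would use, is to note that a word $x$ is a suffix of $u$ iff it is the length-$|x|$ suffix of $u$, so any two suffixes are nested by length; the analogous statement holds for prefixes. With that phrasing both directions reduce to comparing lengths, and the proof is short and self-contained.
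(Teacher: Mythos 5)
Your proof is correct and follows essentially the same route as the paper: the paper compresses the argument into a single chain of equivalences ($w\neq \so(u,v)$ iff some shorter $x=\so(u,v)$ is a right-border of $(w,w)$ iff $w$ is bordered), while you unpack the two directions separately, but both rest on the same observations that suffixes (resp.\ prefixes) of a common word are nested by length and that a shorter right-border of $(u,v)$ sitting inside $w$ is exactly a border of $w$. Your explicit attention to properness via length comparisons is a welcome bit of extra care, but no new idea is involved.
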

\begin{proof}

\noindent We prove an equivalent proposition, namely that $w\neq \so(u,v)$ iff $w$ is bordered.

\noindent
$w\neq \so(u,v)$ $\iff$  There exists a non-empty word $x$ such that $|x| < |w|$ and $x=\so(u,v)$. $\iff$ $x = \so(w,w)$ $\iff$ $w = xs = tx$ for some $s,t\in \Sigma_k^+$ $\iff$  $w$ is bordered.
\begin{comment}
$\Longleftarrow:$ Suppose $w$ is bordered. Then there exists a non-empty word $y$ that is a proper prefix and suffix of $w$. But such a word would also be a suffix of $u$, and a prefix of $v$. Therefore we have $w\neq \so(u,v)$.
\end{comment}
\end{proof}

\begin{lemma}\label{lemma:smallwords}
Let $n\geq 1$. Let $u$ and $v$ be length-$n$ words. Let $i=\lso(u,v)$, and $j=\lso(v,u)$. Then $i+j \leq n$ iff $u = xsy$ and $v = ytx$ for some words $s,t\in \Sigma_k^*,x\in \U_j^k,$ and $y\in \U_i^k$.
\end{lemma}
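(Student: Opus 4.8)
The plan is to prove the two directions of the biconditional separately, observing up front that the reverse implication is essentially a length count while the forward implication is where the structural input (Lemma~\ref{lemma:short}) is actually used. The natural candidates for the factors are the shortest borders themselves: I would set $y = \so(u,v)$ and $x = \so(v,u)$, so that $|y| = i$ and $|x| = j$ hold by definition of $\lso$.

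For the forward direction, assume $i + j \le n$. By definition $y = \so(u,v)$ is a suffix of $u$ and a prefix of $v$, while $x = \so(v,u)$ is a suffix of $v$ and a prefix of $u$. By Lemma~\ref{lemma:short}, both $y$ and $x$ are unbordered, so $y \in \U_i^k$ and $x \in \U_j^k$; the degenerate cases $i = 0$ or $j = 0$, where the corresponding border is empty, are covered since $\epsilon \in \U_0^k$. The one geometric fact to pin down is that, inside a word of length $n$, a prefix of length $j$ and a suffix of length $i$ are disjoint precisely when $i + j \le n$. Under the hypothesis they therefore do not overlap, so I can write $u = xsy$ with $|s| = n - i - j \ge 0$. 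Applying the identical observation to $v$, whose prefix is $y$ and whose suffix is $x$, gives $v = ytx$ with $|t| = n - i - j$, producing the claimed factorization.

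For the reverse direction, suppose $u = xsy$ and $v = ytx$ with $x \in \U_j^k$ and $y \in \U_i^k$. Then $|x| = j$ and $|y| = i$, so reading off lengths in $u = xsy$ yields $n = j + |s| + i$, and since $|s| \ge 0$ I immediately obtain $i + j \le n$. No structural input is needed here at all.

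The proof therefore has no deep obstacle; the only points demanding care are bookkeeping ones. I would be careful to (i) invoke Lemma~\ref{lemma:short} with the correct pair in each position (once for $(u,v)$ to handle $y$, once for $(v,u)$ to handle $x$), (ii) check that the criterion $i + j \le n$ translates correctly into a valid, possibly empty, middle factor $s$ (and $t$), and (iii) treat the boundary cases where $(u,v)$ lacks a right-border or a left-border, so that $i$ or $j$ equals $0$ and the associated factor is $\epsilon$. Once these are verified, both implications follow.
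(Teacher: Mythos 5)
Your proof is correct and follows essentially the same route as the paper's: take $y=\so(u,v)$ and $x=\so(v,u)$, use Lemma~\ref{lemma:short} for unborderedness, and use $i+j\le n$ to see the prefix and suffix occurrences do not overlap, with the converse being a length count. Your explicit handling of the degenerate cases $i=0$ or $j=0$ is a minor extra care the paper leaves implicit, not a different argument.
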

\begin{proof}
\noindent

$\Longrightarrow:$ Let $y=\so(u,v)$ and $x = \so(v,u)$. Let $j = \lso(u,v)$ and $i = \lso(v,u)$. Suppose $i+j \leq n$. By definition there exist words $w,z,\alpha,\beta\in \Sigma_k^*$ such that $u = wy, v=yz, v=\alpha x,$ and $u = x\beta$. But since $i+j = |x| + |y| \leq n$, we have that $x$ and $y$ do not overlap. Thus there exist words $s,t\in \Sigma_k^{n-i-j}$ such that $u = xsy$ and $v = ytx$. By Lemma~\ref{lemma:short}, we have that $x$ and $y$ must be unbordered. Therefore $x \in \U_j^k$ and $y\in \U_i^k$

$\Longleftarrow:$ Follows from the definition of $u$ and $v$.
\end{proof}

\begin{lemma}\label{lemma:bigwords}
Let $n\geq 1$. Let $u$ and $v$ be length-$n$ words. Let $i=\lso(u,v)$, and $j=\lso(v,u)$. Then $i+j > n$ iff 
\begin{enumerate}
    \item [a)] $n+1\leq i+j\leq \frac{4}{3}n$, and
    \item [b)] there exist distinct words $x,y\in \Sigma_k^{i+j - n},$ and $s,t\in \Sigma_k^*$ such that $u$ is of the form $xsytx$ and $v$ is of the form $ytxsy$ where $(x,y)$ is mutually unbordered, and both $xsy$ and $ytx$ are unbordered with $\so(u,v)=ytx$ and $\so(v,u)=xsy$.
\end{enumerate}
\end{lemma}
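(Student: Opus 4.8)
The plan is to prove both directions by analysing how the shortest left- and right-borders overlap inside $u$ and $v$. Throughout, write $X=\so(v,u)$ and $Y=\so(u,v)$, so $|X|=j$ and $|Y|=i$; by Lemma~\ref{lemma:short} both $X$ and $Y$ are unbordered. Since $Y$ is a proper suffix of $u$ we have $i<n$, and likewise $j<n$. Hence if $i+j>n$ then $1\le i,j\le n-1$, and the quantity $a:=i+j-n$ satisfies $1\le a<\min(i,j)$. The form in (b) will use exactly this $a$, since it asks for $x,y\in\Sigma_k^{i+j-n}$.

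For the forward direction I would first record the overlap structure. In $u$ the prefix $X$ (positions $1,\dots,j$) and the suffix $Y$ (positions $n-i+1,\dots,n$) overlap in a block of length $a$, so the length-$a$ suffix of $X$ equals the length-$a$ prefix of $Y$; call this block $y$. Running the identical argument in $v$ (prefix $Y$, suffix $X$, overlap $a$) shows the length-$a$ suffix of $Y$ equals the length-$a$ prefix of $X$; call this block $x$. Thus $x$ is the length-$a$ prefix of $X$ and the length-$a$ suffix of $Y$, while $y$ is the length-$a$ suffix of $X$ and the length-$a$ prefix of $Y$. Since $x$ and $y$ are a proper prefix and a proper suffix of the unbordered word $X$, they cannot coincide, giving $x\neq y$.

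The crux is the bound $2a\le\min(i,j)$, which is precisely what makes both the decomposition in (b) and the range in (a) possible. I would prove $2a\le j$ by contradiction: if $2a>j$, then the length-$a$ prefix $x$ and the length-$a$ suffix $y$ of $X$ overlap inside $X$, and their common block $\mu$, of length $2a-j>0$, is simultaneously a suffix of $x$ and a prefix of $y$. But $Y$ begins with $y$ and ends with $x$, so $\mu$ is both a prefix and a suffix of $Y$; as $0<|\mu|<i$, this is a border of $Y$, contradicting that $Y$ is unbordered. The symmetric argument inside $v$ (using that $X$ begins with $x$ and ends with $y$) gives $2a\le i$. Adding the two bounds yields $i+j\ge 4a$, and since $i+j=n+a$ we get $n\ge 3a$, i.e. $i+j\le\frac{4}{3}n$, which is (a). With $2a\le j$ and $2a\le i$ established, $x$ and $y$ occur without overlap in $X$ and in $Y$, so I may write $X=xsy$ with $s\in\Sigma_k^{j-2a}$ and $Y=ytx$ with $t\in\Sigma_k^{i-2a}$; feeding these back into the two overlap pictures gives $u=xsytx$ and $v=ytxsy$. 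Finally I would check that $(x,y)$ is mutually unbordered: a right-border of $(x,y)$ (a proper suffix of $x$ equal to a proper prefix of $y$) would be both a prefix and a suffix of $Y=ytx$, and a left-border of $(x,y)$ would be both a prefix and a suffix of $X=xsy$, each contradicting the unborderedness of $Y$, respectively $X$. The identities $\so(u,v)=Y=ytx$ and $\so(v,u)=X=xsy$ hold by definition.

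The converse is short. Given the shape in (b), the hypotheses $\so(u,v)=ytx$ and $\so(v,u)=xsy$ give $i=\lso(u,v)=2|x|+|t|$ and $j=\lso(v,u)=2|x|+|s|$ (using $|x|=|y|$); comparing with $n=|xsytx|=3|x|+|s|+|t|$ yields $i+j-n=|x|$, and since $x\neq y$ forces $|x|\ge 1$, we conclude $i+j>n$. I expect the only delicate point to be the border-extraction step showing $2a\le\min(i,j)$; once that inequality is in hand, both the quantitative range (a) and the explicit decomposition in (b) follow by routine bookkeeping.
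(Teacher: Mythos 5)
Your proof is correct and follows essentially the same route as the paper's: both identify $x$ and $y$ as the length-$(i+j-n)$ overlap blocks of the two shortest borders, derive the decomposition $u=xsytx$, $v=ytxsy$, and invoke Lemma~\ref{lemma:short} for the unborderedness claims. The only real difference is bookkeeping: you obtain the key inequality $2(i+j-n)\le\min(i,j)$ (hence the $\tfrac{4}{3}n$ bound) by extracting a border of the unbordered words $\so(u,v)$ and $\so(v,u)$, whereas the paper first shows $(x,y)$ is mutually unbordered via minimality of the borders of $(u,v)$ and then deduces the non-overlap condition from that; your version makes this step somewhat more explicit.
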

\begin{proof}
\noindent

$\Longrightarrow:$
Since $i+j > n$ we have that $(\so(u,v),\so(v,u))$ has a right-border and a left-border. Let $x$ be the length-$(i+j-n)$ suffix of  $\so(u,v)$. We can now write $u = r\alpha x = xw\beta$ and $v = \alpha xw$ for some $r,w,\alpha,\beta\in \Sigma_k^*$ where $|\alpha x| = i$ and $|xw| = j$. Clearly $x$ is a prefix and suffix of $u$. Let $y$ be the length-$(i+j-n)$ suffix of $\so(v,u)$. By a similar argument as above, one can show that $y$ is a prefix and suffix of $v$. If $(x,y)$ has a right-border or a left-border, then $(u,v)$ has a right-border of length $<i$ or a left-border of length $<j$. So $(x,y)$ must be mutually unbordered and $u$ must be of the form $xsytx$ and $v$ must be of the form $ytxsy$ for some $s,t\in \Sigma_k^*$ where $|ytx| = i$ and $|xsy| = j$. Since $(x,y)$ is mutually unbordered, the words $u$ and $v$ can only be of this form if $2|x| + |y| = 2|y|+ |x| \leq n$. Since $|x| = |y| = (i+j-n)$, we have that $3(i+j-n) \leq n \implies i+j \leq \frac{4}{3}n$. Now by Lemma~\ref{lemma:short}, we have that $ytx$ and $xsy$ are unbordered. Since both $xsy$ and $ytx$ are unbordered and $|x|=|y|$, we have that $y$ and $x$ must be distinct.

$\Longleftarrow:$ We have that $i+j>n$ by assumption.
\end{proof}

Perhaps the most peculiar and interesting aspect to Lemma~\ref{lemma:bigwords} is the fact that for length-$n$ words $u$ and $v$, the sum of $\lso(u,v)$ and $\lso(v,u)$ is bounded by a number between $n$ and $2n$. This fact is outlined in Corollary~\ref{cor:fourthirds}.
\begin{corollary}
Let $n\geq 1.$ Let $u$ and $v$ be length-$n$ words. Then $\lso(u,v) + \lso(v,u) \leq \frac{4}{3}n.$
\label{cor:fourthirds}
\end{corollary}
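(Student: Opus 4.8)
The plan is to derive this directly from Lemma~\ref{lemma:bigwords}, which has already done essentially all of the work. Setting $i = \lso(u,v)$ and $j = \lso(v,u)$, I would split into two cases according to whether the two shortest borders overlap inside the words, i.e. whether $i + j \leq n$ or $i + j > n$. This dichotomy is exactly the one around which Lemmas~\ref{lemma:smallwords} and~\ref{lemma:bigwords} are organized, so the case split is natural.

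In the first case, $i + j \leq n$, the conclusion is immediate: since $n \geq 1 > 0$ we have $n \leq \tfrac{4}{3}n$, and therefore $i + j \leq n \leq \tfrac{4}{3}n$. No further argument is needed here.

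In the second case, $i + j > n$, I would simply invoke the forward implication of Lemma~\ref{lemma:bigwords}. Its part~(a) asserts precisely that $i + j > n$ forces $n + 1 \leq i + j \leq \tfrac{4}{3}n$, which gives the desired bound. Combining the two cases yields $\lso(u,v) + \lso(v,u) = i + j \leq \tfrac{4}{3}n$ unconditionally.

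Strictly speaking there is no real obstacle at the level of the corollary, since the nontrivial content—the constant $\tfrac{4}{3}$—is already packaged inside Lemma~\ref{lemma:bigwords}. The one point worth keeping in mind is where that constant originates: when $i + j > n$, the prefix/suffix words $x$ and $y$ of length $i + j - n$ are mutually unbordered, and fitting both copies of $x$ together with a copy of $y$ inside a length-$n$ word (as in the forms $xsytx$ and $ytxsy$) forces $2|x| + |y| \leq n$, i.e. $3(i + j - n) \leq n$. So the only care required is to make sure the corollary is stated as a clean consequence rather than re-deriving that packing constraint, which the lemma already supplies.
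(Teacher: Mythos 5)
Your proposal is correct and matches the paper's (implicit) argument exactly: the corollary is stated as an immediate consequence of Lemma~\ref{lemma:bigwords}, with the case $i+j\leq n$ being trivial and the case $i+j>n$ handled by part~(a) of that lemma. Nothing further is needed.
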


In Example~\ref{example:max} we illustrate pairs of words $(u,v)$ that reach the bound $\frac{4}{3}n$ bound.

\begin{example}
\label{example:max}
The following three pairs of words illustrate the upper bound $\lfloor\frac{4}{3}n\rfloor$ from Lemma~\ref{lemma:bigwords} and Corollary~\ref{cor:fourthirds}. We give examples for all lengths of words by giving examples for each equivalence class modulo $3$.

\indent
For $n= 3m$, we have
\[(u,v) = (0^m1^m0^m,1^m0^m1^m).\]
\indent For $n= 3m+1$, we have
\[(u,v) = (0^m1^{m+1}0^m,1^{m+1}0^m1^m).\]
\indent For $n= 3m+2$, we have
\[(u,v) = (0^m1^{m+2}0^m,1^{m+2}0^m1^{m}).\]
\end{example}

Lemma~\ref{lemma:bigwords} shows that pairs of length-$n$ non-empty words $(u,v)$ where $\lso(u,v) + \lso(v,u)$ is `large' ($>n$) exhibit a particular structure. Namely $\so(u,v)$ is unbordered and $\so(u,v)$ begins and ends with a mutually unbordered pair of words. The same is true for $\so(v,u)$ as well.  So we need an expression for the number of unbordered words whose prefix and suffix of a certain length form a pair of mutually unbordered words.

Let $t\leq n$ be a positive integer. Let $u$ and $v$ be length-$t$ words such that $(u,v)$ is mutually unbordered. Let $G_{u,v}(n)$ denote the number of length-$n$ unbordered words that have $u$ as a prefix, $v$ as a suffix (and vice versa).

\begin{lemma}\label{lemma:unbordered}
Let $n\geq t\geq 1$. Let $u$ and $v$ be length-$t$ words such that $(u,v)$ is mutually unbordered and $u\neq v$. Then the number of unbordered words that have $u$ as a prefix and $v$ as a suffix is 
\[ G_{u,v}(n) = \begin{cases} 
      0, & \text{if $n < 2t$;} \\
      k^{n-2t} - \sum_{i=2t}^{\lfloor n/2\rfloor} G_{u,v}(i)k^{n-2i}, & \text{if $n\geq 2t$.}
   \end{cases}
\]
\end{lemma}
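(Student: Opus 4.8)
The plan is to count all length-$n$ words having $u$ as a prefix and $v$ as a suffix, and then split them into the unbordered ones (which is exactly $G_{u,v}(n)$) and the bordered ones. First I would dispose of the base case $t\le n<2t$. If a length-$n$ word $w$ has $u$ as a prefix and $v$ as a suffix, then the length-$t$ prefix $u$ and the length-$t$ suffix $v$ overlap inside $w$, forcing the length-$(2t-n)$ suffix of $u$ to coincide with the length-$(2t-n)$ prefix of $v$. When $t<n<2t$ this common word is a right-border of $(u,v)$, and when $n=t$ it forces $u=v$. Either conclusion contradicts the hypothesis that $(u,v)$ is mutually unbordered with $u\neq v$, so no such $w$ exists and $G_{u,v}(n)=0$.

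For the recursive case $n\ge 2t$ the prefix $u$ and suffix $v$ no longer overlap, so there are exactly $k^{n-2t}$ length-$n$ words with $u$ as a prefix and $v$ as a suffix (the central $n-2t$ letters are free). It then suffices to show that the number of these words that are bordered equals $\sum_{i=2t}^{\lfloor n/2\rfloor} G_{u,v}(i)k^{n-2i}$ and subtract. I would classify each such bordered word $w$ by its shortest border $b$, using the standard facts (the ones invoked above to count ordinary bordered words) that $b$ is unbordered and has length $i\le\lfloor n/2\rfloor$. The crucial point is the lower bound $i\ge 2t$: if $|b|<t$, then $b$ is a proper prefix of $u$ and a proper suffix of $v$, hence a left-border of $(u,v)$; and if $t\le|b|<2t$, then, since $b$ is both a prefix and a suffix of $w$ of length at least $t$, the word $u$ is a prefix of $b$ and $v$ is a suffix of $b$, which forces a right-border of $(u,v)$ exactly as in the base case. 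Both contradict mutual unborderedness. The same prefix/suffix argument shows that $b$ itself is an unbordered word of length $i$ carrying $u$ as a prefix and $v$ as a suffix, i.e. $b$ is one of the $G_{u,v}(i)$ words.

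Finally I would verify that $w\mapsto(b,z)$, where $w=bzb$ with $b$ the shortest border and $z$ the central block of length $n-2i$, is a bijection onto the set of pairs $(b,z)$ with $b$ unbordered of length $i\in[2t,\lfloor n/2\rfloor]$ having $u$ as prefix and $v$ as suffix, and $z\in\Sigma_k^{n-2i}$. The key sub-claim is that, conversely, for any unbordered $b$ with $|b|\le n/2$ and any $z$, the word $bzb$ has $b$ as its shortest border: a strictly shorter border $b'$ of $bzb$ would lie entirely within the leading copy of $b$ and within the trailing copy of $b$, making $b'$ a border of $b$ and contradicting that $b$ is unbordered. Summing the $k^{n-2i}$ choices of $z$ over the $G_{u,v}(i)$ choices of $b$ for each admissible $i$ counts the bordered words, and subtracting from $k^{n-2t}$ gives the stated recurrence. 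I expect the main obstacle to be the length-range argument for the shortest border — cleanly ruling out borders of length below $2t$ by converting them into a left-border or a right-border of $(u,v)$ — together with the self-overlap bookkeeping in the bijection (including the boundary case $i=\lfloor n/2\rfloor$ with empty $z$); the remaining steps are routine counting.
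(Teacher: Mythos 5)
Your proof is correct and follows essentially the same route as the paper's: count the $k^{n-2t}$ words with prefix $u$ and suffix $v$, classify the bordered ones by their shortest border (which must be unbordered, of length between $2t$ and $\lfloor n/2\rfloor$, and itself carry $u$ as prefix and $v$ as suffix), and subtract. You supply more detail than the paper does on why the shortest border cannot have length below $2t$ and on the bijection $w\mapsto(b,z)$, but the underlying argument is the same.
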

\begin{proof}
If $n<2t$ then  $G_{u,v}(n) = 0$, since $(u,v)$ is mutually unbordered. Suppose $n\geq 2t$. Then the number of unbordered words of length $n$ having $u$ as a prefix and $v$ as a suffix is equal to the number of bordered words that contain $u$ as a prefix and $v$ as a suffix,  subtracted from the total number of words that contain $u$ as a prefix and $v$ as a suffix. Let $w$ be a word of length $n$ such that $u$ is a prefix of $w$ and $v$ is a suffix of $w$. Then $w$ is bordered if and only if its shortest border is of length $j$ where $2t\leq j \leq \lfloor n/2\rfloor$. This is because $(u,v)$ is mutually unbordered and words that have a border of length $>n/2$ must also have a border of length $\leq n/2$. Also observe that the shortest border of $w$ must itself be unbordered and have $u$ as a prefix and $v$ as a suffix. So the total number of words of the form $w$ as described above is $\sum_{i=2t}^{\lfloor n/2\rfloor} G_{u,v}(i)k^{n-2i}$. Therefore, for $n\geq 2t$ we have 
\[G_{u,v}(n) = k^{n-2t} - \sum_{i=2t}^{\lfloor n/2\rfloor} G_{u,v}(i)k^{n-2i}.\]
\end{proof}

From Lemma~\ref{lemma:unbordered} we see that $G_{u,v}$ is independent of $u$ and $v$, but dependent on the length of $|u| = |v|$.  Therefore, for $u$, $v$ words of length $t$, let $G_{u,v}(n) = G_t(n)$.

Also notice that for $n,t\geq 1$ we have,
\begin{equation}
G_t(n) \leq k^{n-2t}.
    \nonumber
\end{equation}
This follows from the fact that $G$ counts a subset of length-$n$ words that have a fixed length-$t$ prefix and suffix.

Finally, we are ready to present recurrences for $M_k(n)$, $R_k(n)$, and $U_k(n).$
\begin{theorem}\label{theorem:enumeration}
The number $\M_k(n)$ of mutually bordered pairs of words of equal length satisfies
\begin{align}
\M_k(n)= &\sum_{i=1}^{n-1}\sum_{j=1}^{n-i}u_iu_jk^{2n-2(i+j)} + \label{eqn:mutualborder} \\ 
&\sum_{i=1}^{\lfloor n/3\rfloor}(U_k(i) - u_i)\sum_{j=2i}^{n-i}G_i(j)G_i(n-j+i),    \nonumber
\end{align}
where $\M_k(1)=0$. Additionally we have 
\begin{equation}
    R_k(n) =\bigg(\sum_{i=1}^{n-1}k^{2n-2i}u_i\bigg) - \M_k(n),\label{eqn:mutualhalfborder}
\end{equation}
and
\begin{equation}
    U_k(n) + 2R_k(n) + M_k(n) = k^{2n}. \label{eqn:mutualunborder}
\end{equation}
\end{theorem}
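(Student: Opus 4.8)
The plan is to establish the three equations in a natural order, deriving the structural recurrence (\ref{eqn:mutualborder}) from the preceding lemmas, then obtaining (\ref{eqn:mutualhalfborder}) and (\ref{eqn:mutualunborder}) by counting arguments. The main work is in (\ref{eqn:mutualborder}); the other two should follow quickly.

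First I would prove (\ref{eqn:mutualborder}) by partitioning the mutually bordered pairs $(u,v)$ according to whether $\lso(u,v)+\lso(v,u)\le n$ or $>n$, exactly the dichotomy set up in Lemmas~\ref{lemma:smallwords} and~\ref{lemma:bigwords}. For the \emph{small} case, Lemma~\ref{lemma:smallwords} gives a bijection between such pairs and data $(x,y,s,t)$ with $x\in\U_j^k$, $y\in\U_i^k$, and $s,t\in\Sigma_k^{n-i-j}$, via $u=xsy$, $v=ytx$. Since $x$ and $y$ are unbordered of lengths $j$ and $i$, and $s,t$ range freely over $\Sigma_k^{n-i-j}$ each, summing $u_j\,u_i\,k^{2(n-i-j)}$ over the valid ranges of $i,j$ should reproduce the first double sum in (\ref{eqn:mutualborder}) (after relabelling indices so the exponent reads $2n-2(i+j)$). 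The one point to check carefully here is that this correspondence is genuinely a bijection, so that no pair is counted twice and the border lengths $i,j$ recorded really are the shortest; this is guaranteed because $x=\so(v,u)$ and $y=\so(u,v)$ are the \emph{shortest} borders, so the decomposition is forced.

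For the \emph{large} case ($i+j>n$), Lemma~\ref{lemma:bigwords} says every such pair has the form $u=xsytx$, $v=ytxsy$ with $|x|=|y|=i+j-n$, $(x,y)$ mutually unbordered, $x\neq y$, and $\so(u,v)=ytx$, $\so(v,u)=xsy$ both unbordered. So I would parametrize these pairs by the common length $\ell=i+j-n$ (which runs from $1$ to $\lfloor n/3\rfloor$ by Corollary~\ref{cor:fourthirds}), the mutually unbordered distinct pair $(x,y)$ of length $\ell$, and the internal words $s,t$. The key observation is that $\so(u,v)=ytx$ is an unbordered word of length $i$ beginning with $y$ and ending with $x$, while $xsy$ is unbordered of length $j$ beginning with $x$ and ending with $y$, and these two overlap in $u$ to form the full word. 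This is precisely what $G_\ell$ counts: the number of unbordered length-$m$ words with a fixed mutually unbordered pair as prefix/suffix is $G_{x,y}(m)=G_\ell(m)$ by Lemma~\ref{lemma:unbordered}. Writing $j$ for the length of $xsy$, the length of $ytx$ is $n-j+\ell$ (since $i+j-n=\ell$ forces $i=n-j+\ell$), so each choice of $(x,y)$ contributes $\sum_j G_\ell(j)\,G_\ell(n-j+\ell)$, and the number of distinct mutually unbordered pairs of length $\ell$ is $U_k(\ell)-u_\ell$ (total mutually unbordered pairs minus the $u_\ell$ diagonal pairs $(x,x)$, which are excluded since $x\neq y$). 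Summing over $\ell$ gives the second sum. The main obstacle I anticipate is verifying this counting identity: I must confirm that $G_\ell(j)\,G_\ell(n-j+\ell)$ really counts the valid $(u,v)$ for fixed $(x,y)$ without over- or under-counting, in particular that the overlap region $x$ is shared consistently between the two unbordered halves and that the ranges $2\ell\le j\le n-\ell$ are exactly right. Once this is settled, (\ref{eqn:mutualborder}) follows by adding the two cases.

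Finally, I would dispatch (\ref{eqn:mutualhalfborder}) and (\ref{eqn:mutualunborder}) by elementary counting. For (\ref{eqn:mutualhalfborder}), note that $\sum_{i=1}^{n-1}k^{2n-2i}u_i$ counts pairs $(u,v)$ possessing a right-border, classified by the length $i=\lso(v,u)$ of the shortest right-border: such a border is an unbordered word of length $i$ (Lemma~\ref{lemma:short}), and the remaining $2(n-i)$ positions are free. Subtracting the pairs that also have a left-border, namely $M_k(n)$, leaves exactly the right-bordered pairs $R_k(n)$, giving (\ref{eqn:mutualhalfborder}). Equation (\ref{eqn:mutualunborder}) is then immediate: every pair of length-$n$ words falls into exactly one of the four classes mutually unbordered, right-bordered, left-bordered, mutually bordered, and by symmetry (swapping the roles of prefixes and suffixes, equivalently reversing both words) the number of left-bordered pairs equals $R_k(n)$, so the four counts $U_k(n)+R_k(n)+R_k(n)+M_k(n)$ sum to the total $k^{2n}$.
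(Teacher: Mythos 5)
Your proposal is correct and follows essentially the same route as the paper: the same dichotomy via Lemmas~\ref{lemma:smallwords} and~\ref{lemma:bigwords} for Eq.~(\ref{eqn:mutualborder}), with the small case counted by pairs of unbordered borders plus free middles and the large case parametrized by $\ell=i+j-n$, the $U_k(\ell)-u_\ell$ choices of $(x,y)$, and the convolution of $G_\ell$ values; Eqs.~(\ref{eqn:mutualhalfborder}) and~(\ref{eqn:mutualunborder}) are handled by the same subtraction and four-way partition arguments. The points you flag for verification (the decomposition being forced because the shortest borders are unbordered, and the range $2\ell\le j\le n-\ell$) are exactly the checks the paper makes.
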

\begin{proof}
Let $n\geq 1$, and let $u$ and $v$ be words of length $n$.

\bigskip

\noindent Proof of Eq.~(\ref{eqn:mutualunborder}): Let $(u,v)$ be a pair of length-$n$ words. Exactly one of the following must be true about $(u,v)$:
\begin{enumerate}[label=(\alph*)]
    \item $(u,v)$ has a right-border and a left-border (i.e., $(u,v)$ is mutually bordered),
    \item $(u,v)$ has a right-border but not a left-border (i.e., $(u,v)$ is right-bordered),
    \item $(u,v)$ does not have a right-border but has a left-border (i.e., $(u,v)$ is left-bordered),
    \item $(u,v)$ does not have a right-border or a left-border (i.e., $(u,v)$ is mutually unbordered).
\end{enumerate}
Clearly the number of right-bordered pairs of words is the same as the number of pairs of left-bordered pairs of words. From these facts we conclude that $U_k(n) + 2R_k(n) + M_k(n) = k^{2n}$.

\bigskip

\noindent Proof of Eq.~(\ref{eqn:mutualhalfborder}): The number of right-bordered pairs of words is equal to the total number of pairs $(u,v)$ that have a right-border subtracted from the total number of mutually bordered pairs of words. So $R_k(n) = ( \sum_{i=1}^{n-1}k^{2n-2i}u_i) - \M_k(n)$.

\bigskip

\noindent Proof of Eq.~(\ref{eqn:mutualborder}): Clearly $\M(1) = 0$ since words of length $1$ cannot have left-borders or right-borders. Let $i=\lso(u,v)$ and $j = \lso(v,u)$. By Lemma~\ref{lemma:smallwords} we have that $i+j \leq n$ iff $u = xsy$ and $v = ytx$ for some words $s,t\in \Sigma_k^*,x\in \U_j^k,$ and $y\in \U_i^k$. We can count the number of pairs of such words using the number of unbordered words, $\sum_{i=1}^{n-1}\sum_{j=1}^{n-i}u_iu_jk^{2n-2(i+j)}$. 

By Lemma~\ref{lemma:bigwords} we have that $i+j > n$ iff $n+1\leq i+j\leq \frac{4}{3}n$ and there exist words $x,y\in \Sigma_k^{i+j - n},$ and $s,t\in \Sigma_k^*$ such that $u$ is of the form $xsytx$ and $v$ is of the form $ytxsy$ where $(x,y)$ is mutually unbordered, both $xsy$ and $ytx$ are unbordered with $|ytx| = i$ and $|xsy| = j$, and $x\neq y$. The fact that $n+1\leq i+j \leq \frac{4}{3}n$ and $i$, $j$, $n$ are integers imply that $1\leq |x| = |y| = (i+j-n) \leq \lfloor n/3\rfloor$. Let $p=(i+j-n)$. Since $(x,y)$ is mutually unbordered and both $xsy$ and $ytx$ are unbordered, we have that $x\neq y$. Suppose that we in fact have $w = x = y$ for some $w \in \Sigma_k^p$. The only such $w$ must be unbordered, since $(x,y)$ is unbordered. Therefore, the number of mutually unbordered pairs $(x,y)$ with $x\neq y$ is $U_k(p) - u_p$.

By Lemma~\ref{lemma:unbordered} we know that for $(x,y)$ fixed, the number of unbordered words of the form $ytx$ with $(x,y)$ mutually unbordered and $x\neq y$ is $G_{x,y}(i)=G_{p}(i)$. Similarly the number of unbordered words of the form $xsy$ is $G_{x,y}(n-i+p)=G_{p}(n-i+p)$. We also  have that $i \geq 2p$ and $i\leq n- p$ since $(x,y)$ is mutually unbordered. For $(x,y)$ fixed, the total number of pairs of words of the form $(xsy,ytx)$ is $\sum_{l=2p}^{n-p} G_{p}(l)G_{p}(n-l+p)$. So the number of words of the form $xsytx$ and $ytxsy$ as described above is equal to the number of pairs of words $(xsy,ytx)$ with $(x,y)$ mutually unbordered, $xsy$ and $ytx$ unbordered, and $x\neq y$. Since $1\leq p = (i+j-n) \leq \lfloor n/3\rfloor$ we have that this is equal to $\sum_{p=1}^{\lfloor n/3\rfloor}(U_k(p) - u_p)\sum_{l=2p}^{n-p}G_p(l)G_p(n-l+p)$.

Putting it all together, we have 
\begin{align}
\M_k(n)= &\sum_{i=1}^{n-1}\sum_{j=1}^{n-i}u_iu_jk^{2n-2(i+j)} + \nonumber \\ 
&\sum_{i=1}^{\lfloor n/3\rfloor}(U_k(i) - u_i)\sum_{j=2i}^{n-i}G_i(j)G_i(n-j+i).    \nonumber
\end{align}
\end{proof}

\section{Limiting values}
In this section we show that the limit $L_k=\lim\limits_{n\to \infty} \M_k(n)/k^{2n}$ exists. 

\begin{theorem}\label{theorem:limiting}
The following limit exists: \[L_k=\lim_{n\to \infty} \frac{\M_k(n)}{k^{2n}}.\] Furthermore, we have that
\[\bigg(\sum_{i=1}^{n}u_ik^{-2i}\bigg)^2 \leq L_k \leq \bigg(\bigg(\sum_{i=1}^{n}u_ik^{-2i}\bigg) + \frac{k^{-n}}{k-1}\bigg)^2.\]
\end{theorem}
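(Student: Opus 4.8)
The plan is to start from the recurrence for $\M_k(n)$ in Theorem~\ref{theorem:enumeration}, divide through by $k^{2n}$, and treat the two summands separately. Writing $a_i = u_i k^{-2i}$ and $S_n = \sum_{i=1}^{n} a_i$, the first summand becomes the truncated Cauchy product $A_n := \sum_{i=1}^{n-1}\sum_{j=1}^{n-i} a_i a_j$, whose summation region is exactly $\{(i,j) : i,j\geq 1,\ i+j\leq n\}$. I expect $A_n \to S^2$, where $S = \sum_{i=1}^{\infty} a_i$, while the second summand (divided by $k^{2n}$) vanishes. Granting both, $L_k = S^2$, and the displayed bounds will follow by sandwiching $S$ between $S_n$ and $S_n + k^{-n}/(k-1)$.

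First I would confirm that $S$ is finite: since $u_i \leq k^i$ we have $a_i \leq k^{-i}$, so (for $k\geq 2$) $S \leq \sum_{i\geq 1} k^{-i} = 1/(k-1) < \infty$. For the convergence of $A_n$, note that $S^2 - A_n = \sum_{i+j>n,\, i,j\geq 1} a_i a_j$; every pair with $i+j>n$ has $i>n/2$ or $j>n/2$, so this tail is at most $2S\sum_{i>n/2} a_i$, which tends to $0$. Hence $A_n \to S^2$.

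The hard part will be showing the second summand is negligible, since this is where the structural content of Lemma~\ref{lemma:bigwords} is absorbed into a crude bound. Using the inequality $G_t(m) \leq k^{m-2t}$ recorded after Lemma~\ref{lemma:unbordered}, for fixed $i$ each product satisfies $G_i(j)\,G_i(n-j+i) \leq k^{j-2i}\cdot k^{(n-j+i)-2i} = k^{n-3i}$, so the inner sum over $2i\leq j\leq n-i$ is at most $(n-3i+1)k^{n-3i}$. Since $U_k(i)-u_i \leq U_k(i) \leq k^{2i}$, the full second summand divided by $k^{2n}$ is bounded by $\sum_{i=1}^{\lfloor n/3\rfloor} k^{2i}(n-3i+1)k^{n-3i}k^{-2n} = k^{-n}\sum_{i=1}^{\lfloor n/3\rfloor}(n-3i+1)k^{-i}$. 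The remaining sum is $O(n)$ uniformly in $n$, so this contribution is $O(nk^{-n})\to 0$. Combining with the previous paragraph gives $L_k = S^2$.

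Finally I would read off the explicit bounds. The partial sums $S_n$ are nondecreasing with limit $S$, so $S_n \leq S$ and hence $S_n^2 \leq S^2 = L_k$, the lower bound. For the upper bound, the tail satisfies $S - S_n = \sum_{i>n} a_i \leq \sum_{i>n} k^{-i} = k^{-n}/(k-1)$, whence $S \leq S_n + k^{-n}/(k-1)$; since all quantities are nonnegative, squaring preserves the inequality and yields $L_k \leq \big(S_n + k^{-n}/(k-1)\big)^2$.
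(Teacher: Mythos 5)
Your proposal is correct, and its overall skeleton matches the paper's: split the recurrence from Theorem~\ref{theorem:enumeration} into its two summands, show the ``large overlap'' term is negligible after dividing by $k^{2n}$, identify the limit of the first term as $\big(\sum_{i\ge 1}u_ik^{-2i}\big)^2$, and read off the bounds from the geometric tail estimate. The two places where you diverge are both in the details, and in both you are arguably tighter. For the second summand, the paper argues combinatorially: a pair $(xsytx,\,ytxsy)$ is determined by the word $xsytx$ (at most $k^n$ choices) together with a few cut positions, so there are only $o(k^{2n})$ such pairs. You instead bound the sum analytically via $G_t(m)\le k^{m-2t}$ and $U_k(i)\le k^{2i}$, obtaining the explicit estimate $O(nk^{-n})$ after normalization; this stays entirely inside the recurrence and gives a quantitative rate. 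For the first summand, the paper compares $\sum_{i\le n-1}\sum_{j\le n-i}$ to the full square $\sum_{i\le n-1}\sum_{j\le n-1}$ and argues that for each fixed $i$ the discarded inner tail $\sum_{j=n-i+1}^{n-1}u_jk^{-2j}$ tends to $0$ --- a step that, as written, glosses over the interchange of the limit with a sum whose length grows with $n$. Your bound $S^2-A_n\le 2S\sum_{i>n/2}a_i$ handles the whole discarded region at once and avoids that interchange, so it is the cleaner of the two arguments. The derivation of the final sandwich bounds is identical in both.
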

\begin{proof}
From the recurrence for $\M_k(n)$, we see that there are two main terms. The first term is $\sum_{i=1}^{n-1}\sum_{j=1}^{n-i}u_iu_jk^{2n-2(i+j)}$, which counts all pairs of words $(u,v)$ where $\lso(u,v) +\lso(v,u)$ is `small'. The second term is $\sum_{i=1}^{\lfloor n/3\rfloor}(U_k(i)- u_i)\sum_{j=2i}^{n-i}G_i(j)G_i(n-j+i)$, which counts all pairs of words $(u,v)$ where $\lso(u,v)+\lso(v,u)$ is `large'. Now from Lemma~\ref{lemma:bigwords}, we know that the only pairs $(u,v)$ where $\lso(u,v)+\lso(v,u)$ is `large' are pairs of words of the form $(xsytx,ytxsy)$. There are at most $k^n$ choices for $xsytx$, and $ytxsy$ can be recreated from $xsytx$ by knowing the starting positions of $s$, $y$, and $t$. Therefore there must be $o(k^{2n})$ pairs of such  words. So in the limit $L_k$, this term goes to $0$, and thus $L_k=\lim\limits_{n\to \infty} \sum_{i=1}^{n-1}\sum_{j=1}^{n-i}u_iu_jk^{-2(i+j)}$. Therefore we have that the limit $L_k$ exists if and only if $\lim_{n\to \infty} \sum_{i=1}^{n-1}\sum_{j=1}^{n-i}u_iu_jk^{-2(i+j)}$ converges.

Consider the infinite double series \[L_k' =\lim_{n\to \infty} \sum_{i=1}^{n-1}\sum_{j=1}^{n-1}u_iu_jk^{-2(i+j)}.\] We can factor out $u_ik^{-2i}$ out of the nested series and split up the nested sum to get  
\begin{align}
L_k' &= \lim_{n\to \infty}\sum_{i=1}^{n-1}u_ik^{-2i}\sum_{j=1}^{n-1}u_jk^{-2j} \nonumber \\
     &= \lim_{n\to \infty}\sum_{i=1}^{n-1}u_ik^{-2i}\bigg(\sum_{j=1}^{n-i}u_jk^{-2j} + \sum_{j=n-i+1}^{n-1}u_jk^{-2j}\bigg).\nonumber
\end{align}
For each $i$, we have  $\lim\limits_{n\to \infty} \sum_{j=n-i+1}^{n-1}u_jk^{-2j}=0$, and thus, $L_k = L_k'$.

So using the fact that $L_k=L_k'$, we have

\[L_k =\bigg(\sum_{i=1}^{\infty}u_ik^{-2i}\bigg)^2.\]

Thus the limit $L_k$ exists if and only if the series $\sum_{i=1}^{\infty}u_ik^{-2i}$ converges. Since $u_i \leq k^i$ we have that $u_ik^{-2i} \leq k^{-i}$. Therefore by direct comparison we have that $\sum_{i=1}^{\infty}u_ik^{-2i}$ converges, and so the limit $L_k$ exists. Since $ u_ik^{-2i} \leq k^{-i}$ we have that $\sum_{i=m}^\infty u_ik^{-2i} \leq \sum_{i=m}^\infty k^{-i}$. Using this fact along with the fact that $\sum_{i=1}^n u_ik^{-2i} \leq \sum_{i=1}^\infty u_ik^{-2i}$, we get the following inequalities,

\begin{align}
\sum_{i=1}^{n}u_ik^{-2i} \leq \sum_{i=1}^{\infty}u_ik^{-2i} &\leq \bigg(\sum_{i=1}^{n}u_ik^{-2i}\bigg) + \sum_{i=n+1}^\infty k^{-i}\nonumber \\
&=\bigg(\sum_{i=1}^{n}u_ik^{-2i}\bigg) + \frac{k^{-n}}{k-1}.\nonumber
\end{align}
Now we have bounds for our limit,
\begin{align}
    \bigg(\sum_{i=1}^{n}u_ik^{-2i}\bigg)^2 \leq L_k&=\bigg(\sum_{i=1}^{\infty}u_ik^{-2i}\bigg)^2\nonumber \\
       &\leq \bigg(\bigg(\sum_{i=1}^{n}u_ik^{-2i}\bigg) + \frac{k^{-n}}{k-1}\bigg)^2\nonumber
\end{align}

\end{proof}

\begin{corollary}
The following limit exists:
\[\lim_{n\to \infty} \frac{R_k(n)}{k^{2n}}.\]
\end{corollary}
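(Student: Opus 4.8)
The plan is to reduce the statement to two facts already in hand: Eq.~(\ref{eqn:mutualhalfborder}), which expresses $R_k(n)$ in terms of $\M_k(n)$ and a simple sum of unbordered-word counts, and Theorem~\ref{theorem:limiting}, which asserts the existence of $L_k=\lim_{n\to\infty}\M_k(n)/k^{2n}$. First I would divide Eq.~(\ref{eqn:mutualhalfborder}) through by $k^{2n}$ to obtain
\[
\frac{R_k(n)}{k^{2n}} = \sum_{i=1}^{n-1}u_ik^{-2i} - \frac{\M_k(n)}{k^{2n}}.
\]

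Next I would argue that each term on the right-hand side converges as $n\to\infty$. The second term tends to $L_k$ by Theorem~\ref{theorem:limiting}. For the first term, observe that it is the $(n-1)$-th partial sum of the series $\sum_{i=1}^{\infty}u_ik^{-2i}$, whose convergence was already established in the proof of Theorem~\ref{theorem:limiting} by the comparison $u_ik^{-2i}\le k^{-i}$; write $S=\sum_{i=1}^{\infty}u_ik^{-2i}$ for its sum. Since both terms on the right converge, their difference converges, and
\[
\lim_{n\to\infty}\frac{R_k(n)}{k^{2n}} = S - L_k = \sqrt{L_k} - L_k,
\]
where the final equality uses the identity $L_k = S^2$ proved in the same theorem.

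I expect no genuine obstacle: the corollary is an immediate consequence of the existence of $L_k$ together with the (already proved) convergence of $\sum_{i=1}^{\infty}u_ik^{-2i}$, since the difference of two convergent sequences converges. The only point requiring care is bookkeeping, namely to \emph{reuse} rather than reprove the convergence of $\sum_{i=1}^{\infty}u_ik^{-2i}$ and the relation $L_k=S^2$, both of which are buried inside the proof of Theorem~\ref{theorem:limiting}; it would be cleanest to isolate those two statements there so they can be cited verbatim here.
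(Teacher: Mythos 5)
Your proof is correct and is exactly the argument the paper intends (the corollary is left unproved there as an immediate consequence): divide Eq.~(\ref{eqn:mutualhalfborder}) by $k^{2n}$ and note that both $\sum_{i=1}^{n-1}u_ik^{-2i}$ and $\M_k(n)/k^{2n}$ converge, the former by the comparison $u_ik^{-2i}\le k^{-i}$ and the latter by Theorem~\ref{theorem:limiting}. The closed form $\sqrt{L_k}-L_k$ for the limit is a nice bonus consistent with the identity $L_k=\bigl(\sum_{i=1}^{\infty}u_ik^{-2i}\bigr)^2$ established in that theorem's proof.
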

\begin{corollary}
The following limit exists:
\[\lim_{n\to \infty} \frac{U_k(n)}{k^{2n}}.\]\label{cor:limUnb}
\end{corollary}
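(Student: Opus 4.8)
The plan is to derive Corollary~\ref{cor:limUnb} directly from the linear identity in Eq.~(\ref{eqn:mutualunborder}) together with the two limits already established. The entire argument is an application of the algebra of limits, so the work is bookkeeping rather than any genuinely new estimate; I will flag at the end that there is no serious obstacle here.

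First I would recall Eq.~(\ref{eqn:mutualunborder}), which states that every pair of length-$n$ words falls into exactly one of the four categories (mutually bordered, right-bordered, left-bordered, mutually unbordered), giving
\[
U_k(n) + 2R_k(n) + M_k(n) = k^{2n}.
\]
Dividing through by $k^{2n}$ and solving for the term of interest yields
\[
\frac{U_k(n)}{k^{2n}} = 1 - 2\,\frac{R_k(n)}{k^{2n}} - \frac{M_k(n)}{k^{2n}}.
\]
Now $\lim_{n\to\infty} M_k(n)/k^{2n} = L_k$ exists by Theorem~\ref{theorem:limiting}, and $\lim_{n\to\infty} R_k(n)/k^{2n}$ exists by the preceding corollary. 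Since the right-hand side is a fixed linear combination of two convergent sequences (and the constant $1$), the limit of the right-hand side exists, and therefore so does $\lim_{n\to\infty} U_k(n)/k^{2n}$.

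For completeness I would also note how the $R_k$ limit feeds in, since it too is elementary: dividing Eq.~(\ref{eqn:mutualhalfborder}) by $k^{2n}$ gives $R_k(n)/k^{2n} = \sum_{i=1}^{n-1} u_i k^{-2i} - M_k(n)/k^{2n}$, whose first term converges to $\sum_{i=1}^{\infty} u_i k^{-2i}$ by the comparison $u_i k^{-2i} \leq k^{-i}$ already used in Theorem~\ref{theorem:limiting}, and whose second term converges to $L_k$. Thus the claimed value is
\[
\lim_{n\to\infty} \frac{U_k(n)}{k^{2n}} = 1 - 2\bigg(\sum_{i=1}^{\infty} u_i k^{-2i} - L_k\bigg) - L_k = 1 - 2\sum_{i=1}^{\infty} u_i k^{-2i} + L_k,
\]
which can be recorded as an explicit expression if desired.

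The only thing to watch is that I am combining limits rather than manipulating partial sums, so I must invoke convergence of each piece before splitting the sum; there is no delicate interchange of limits and no error term to control beyond what Theorem~\ref{theorem:limiting} already supplies. Consequently I expect no real difficulty: the result is a corollary in the strict sense, and the proof is essentially the single display above.
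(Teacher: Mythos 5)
Your proof is correct and is exactly the argument the paper intends (it leaves the corollary unproved, placing it right after the identity $U_k(n) + 2R_k(n) + M_k(n) = k^{2n}$ and the corresponding corollary for $R_k$): divide Eq.~(\ref{eqn:mutualunborder}) by $k^{2n}$ and apply the algebra of limits. The explicit value $1 - 2\sum_{i=1}^{\infty} u_i k^{-2i} + L_k$ is a correct bonus consistent with Table~\ref{table:first}.
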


Table~\ref{table:first} shows the behaviour of the functions $\M_k(n)$, $R_k(n)$, and $U_k(n)$ as $k$ increases. Interestingly, there are more mutually bordered pairs than not when $k=2$, but when $k$ increases the number of mutually unbordered pairs of words dominates.

\begin{table}[h]
\centering
\caption{Limits of recurrences as $k$ increases}
{\renewcommand{\arraystretch}{1.1}
\begin{tabular}{c|c|c|c}
$k$ & $\lim\limits_{n\to \infty} \frac{\M_k(n)}{k^{2n}}$ & $\lim\limits_{n\to \infty} \frac{R_k(n)}{k^{2n}}$ & $\lim\limits_{n\to \infty} \frac{U_k(n)}{k^{2n}}$ \\
\hline
2 & 0.536 & 0.196 & 0.072\\
3 & 0.196 & 0.247 & 0.310\\
4 & 0.098 & 0.215 & 0.473\\
5 & 0.058 & 0.182 & 0.578\\
10 & 0.012 & 0.098 & 0.792\\
100 & 0.000 & 0.010 & 0.980
\end{tabular}
}
\captionsetup{justification=centering}
\label{table:first}
\end{table}
\section{Expected shortest right-border}

In this section we compute expected value of $\lso(u,v)$ and $\lso(v,u)$ for length-$n$ words $u$ and $v$. Additionally, we show that the expected value tends to a constant.

Let $S_k(i,n)$ denote the number of pairs of length-$n$ words $(u,v)$ over a $k$-letter alphabet such that $\lso(u,v) = i$.

\begin{proposition}
Let $n,k,i\geq 1$. Then $S_k(i,n)=u_ik^{2(n-i)}$.
\end{proposition}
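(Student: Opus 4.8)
The plan is to count the pairs with a prescribed shortest right-border length by grouping them according to that right-border itself. Note first that a right-border is a \emph{proper} suffix of $u$ and \emph{proper} prefix of $v$, so $\lso(u,v)=i$ can only occur for $1\le i\le n-1$, and I work in this range. The crux is the following reformulation: $\lso(u,v)=i$ if and only if there is a word $w\in\U_i^k$ that is a proper suffix of $u$ and a proper prefix of $v$. The forward direction is immediate, since $\so(u,v)$ is then such a $w$ (it is unbordered by Lemma~\ref{lemma:short}), and the backward direction is exactly Lemma~\ref{lemma:short}, which guarantees that any unbordered common proper suffix/prefix is already the shortest right-border, so no shorter one can exist.

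Next I would record a uniqueness observation: for a fixed pair $(u,v)$ there is at most one $w\in\U_i^k$ that is a proper suffix of $u$ and a proper prefix of $v$, because any two candidates are both the length-$i$ suffix of $u$ and hence coincide. Consequently the set $\{(u,v): \lso(u,v)=i\}$ decomposes as a \emph{disjoint} union, over $w\in\U_i^k$, of the sets $A_w=\{(u,v): w\text{ is a proper suffix of }u\text{ and a proper prefix of }v\}$; exhaustiveness comes from the reformulation above and disjointness from the uniqueness just noted.

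It then remains to count a single $A_w$. Fixing $w\in\U_i^k$ with $i<n$, a length-$n$ word $u$ has $w$ as a suffix in exactly $k^{n-i}$ ways and, independently, a length-$n$ word $v$ has $w$ as a prefix in $k^{n-i}$ ways, so $|A_w|=k^{2(n-i)}$. Summing over the $u_i=|\U_i^k|$ choices of $w$ yields $S_k(i,n)=u_ik^{2(n-i)}$. I do not expect a real obstacle: the one thing that must be handled cleanly is the disjointness of the decomposition, and this is precisely what Lemma~\ref{lemma:short} delivers by pinning down the shortest right-border as the unique unbordered common proper suffix/prefix.
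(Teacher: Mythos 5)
Your proof is correct and follows essentially the same route as the paper, whose entire proof is the single line ``Follows directly from Lemma~\ref{lemma:short}''; your write-up is just the careful expansion of that, using Lemma~\ref{lemma:short} to identify pairs with $\lso(u,v)=i$ with pairs admitting an unbordered length-$i$ common proper suffix/prefix and then counting the disjoint pieces.
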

\begin{proof}
Follows directly from Lemma~\ref{lemma:short}.
\end{proof}

\begin{theorem}\label{thm:expected}
Let $n,k\geq 1$. Let $u$ and $v$ be length-$n$ words over a $k$-letter alphabet. Then the expected value of $\lso(u,v)$ is $O(1)$. 
\end{theorem}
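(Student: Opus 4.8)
The plan is to express the expectation as a weighted average of the values of $\lso(u,v)$, rewrite the weights using the counting formula $S_k(i,n)=u_ik^{2(n-i)}$ from the preceding proposition, and then bound the resulting sum by a convergent series whose value is independent of $n$. First I would observe that a right-border of $(u,v)$ is a non-empty proper suffix of $u$ that is a proper prefix of $v$, so its length lies in $\{1,\ldots,n-1\}$; hence $\lso(u,v)$ takes values in $\{0,1,\ldots,n-1\}$, with the value $0$ reserved for pairs having no right-border. Since there are exactly $k^{2n}$ pairs of length-$n$ words, and the pairs with $\lso(u,v)=0$ contribute nothing to the expectation, we have
\[
E[\lso(u,v)] = \frac{1}{k^{2n}}\sum_{i=1}^{n-1} i\cdot S_k(i,n).
\]

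Next I would substitute $S_k(i,n)=u_ik^{2(n-i)}$ and cancel the factor $k^{2n}$, using $k^{2(n-i)}/k^{2n}=k^{-2i}$, to obtain
\[
E[\lso(u,v)] = \sum_{i=1}^{n-1} i\,u_i\,k^{-2i}.
\]
The crucial structural point is that after this simplification all of the dependence on $n$ is confined to the upper limit of summation; there is no factor of $n$ sitting outside the sum, so it is plausible that the whole quantity stays bounded as $n\to\infty$.

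Finally I would bound this by a fixed series. Using the trivial inequality $u_i\le k^i$ (the number of unbordered words of length $i$ cannot exceed the number of all length-$i$ words), each term satisfies $i\,u_i\,k^{-2i}\le i\,k^{-i}$, so for every $n$,
\[
E[\lso(u,v)] \le \sum_{i=1}^{n-1} i\,k^{-i} \le \sum_{i=1}^{\infty} i\,k^{-i} = \frac{k}{(k-1)^2}
\]
for $k\ge 2$, where the last equality is the value of the differentiated geometric series $\sum_{i\ge1} i x^i = x/(1-x)^2$ at $x=1/k$. Since $k/(k-1)^2$ depends only on $k$ and not on $n$, this gives $E[\lso(u,v)]=O(1)$; the degenerate case $k=1$ is trivial, since then only the $i=1$ term survives.

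I do not anticipate any genuine obstacle: once the proposition $S_k(i,n)=u_ik^{2(n-i)}$ is available, the computation is essentially mechanical. The only step requiring any care is recognizing that $\sum_{i\ge1} i\,k^{-i}$ is a convergent series with a finite, $n$-independent sum, and remembering that the $i=0$ term (pairs without a right-border) drops out of the expectation so that the summation begins at $i=1$.
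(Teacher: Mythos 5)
Your proposal is correct and follows essentially the same route as the paper: both write the expectation as $\frac{1}{k^{2n}}\sum_{i=1}^{n-1} i\cdot S_k(i,n)$, substitute $S_k(i,n)=u_ik^{2(n-i)}$, and bound the resulting sum via $u_i\le k^i$ by comparison with the convergent series $\sum_{i\ge 1} i\,k^{-i}$. Your explicit closed form $k/(k-1)^2$ is a minor addition; the paper instead records the analogous tail estimate $k^{-n}\frac{k(n+1)-n}{(1-k)^2}$, but the argument is the same.
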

\begin{proof}

\begin{align}
   \lim_{n\to\infty} \sum_{i=0}^{n}i\cdot \Pr[X=i] &= \lim_{n\to\infty}\frac{1}{k^{2n}}\sum_{i=1}^{n-1}i\cdot S_k(i,n)\nonumber \\
    &= \sum_{i=1}^{\infty}i\cdot u_i k^{-2i} . \nonumber
\end{align}

Since $u_i \leq k^i$, we have that $i\cdot u_ik^{-2i} \leq i\cdot k^{-i}$. Therefore by direct comparison, the series $\sum_{i=1}^{\infty}i\cdot u_i k^{-2i}$ converges. 

We clearly have that \[\sum_{i=1}^{n}i\cdot u_i k^{-2i} \leq \sum_{i=1}^{\infty}i\cdot u_i k^{-2i}.\]

Since $i\cdot u_ik^{-2i} \leq i\cdot k^{-i}$ we have that $\sum_{i=m}^\infty i\cdot u_ik^{-2i} \leq \sum_{i=m}^\infty i\cdot k^{-i}$. Using this we get the upper bound
\begin{align}
\sum_{i=1}^{\infty}i\cdot u_i k^{-2i}&\leq \bigg(\sum_{i=1}^{n}i\cdot u_i k^{-2i}\bigg)+ \sum_{i=n+1}^{\infty}i\cdot k^{-i}\nonumber \\ 
&=\bigg(\sum_{i=1}^{n}i\cdot u_i k^{-2i}\bigg)+ k^{-n}\frac{k(n+1)-n}{(1-k)^2}.\nonumber    
\end{align}
\end{proof}

Table~\ref{table:second} shows the behaviour the expected shortest right-border/left-border as $k$ increases. Interestingly, $k=2$ is the only value of $k$ for which the expected length of the shortest left-border and right-border is greater than $1$. For all other $k>2$ we have that the expected shortest left-border and right-border are less than one symbol in length. 

\begin{table}[h]
\centering
\caption{Asymptotic expected value of $\lso(u,v)$ and $\lso(v,u)$}
{\renewcommand{\arraystretch}{1.1}
\begin{tabular}{c|c}
$k$ & $\sum\limits_{i=1}^{\infty}i\cdot u_i k^{-2i}$  \\
\hline
2 & 1.156 \\
3 & 0.605\\
4 & 0.395 \\
5 & 0.290 \\
10 & 0.121 \\
100 & 0.010
\end{tabular}
}
\captionsetup{justification=centering}
\label{table:second}
\end{table}

\section{Open problems}
We conclude by posing some open problems:
\begin{itemize}
    \item How many pairs of length-$n$ words $(u,v)$ have a largest right-border/left-border of length $i$?
    \item What is the expected length of the longest right-border/left-border of a pair of words?
    \item Find recurrences for $\M_k(m,n)$, $R_k(m,n)$, and $U_k(m,n)$.
    \item Generalize to arbitrary tuples or sets of size $\geq 3$. Two obvious generalizations possible:
        \begin{enumerate}
            \item All consecutive words in a tuple are either mutually unbordered, or mutually bordered.
            \item All pairs of words in a set are either mutually unbordered (i.e., cross-bifix-free sets of words), or mutually bordered.
        \end{enumerate}
    \item The term \emph{Gray code} is used to describe an exhaustive listing of a set of combinatorial objects where successive terms differ by some small, well-defined amount. Gray codes are named after Frank Gray, who discovered a simple method of listing all $2^n$ binary words where successive words differ in exactly one position. Gray codes for cross-bifix-free sets already exist~\cite{Bernini&Bilotta:2017,Bernini:2014}. Can one generate a Gray code for mutually unbordered and bordered pairs of length-$n$ words where successive terms differ by a small amount if their Hamming distance is bounded by some constant $C\geq 1$? How small can $C$ get? 
    \item The concept of mutual borderedness and unborderedness can be extended to two dimensions~\cite{Barcucci:2017, Barcucci:2018} where words are two-dimensional matrices with entries taken from a finite alphabet. Similar to the study of mutually bordered and unbordered pairs of words in this paper, can one do the same for the set of all mutually bordered and unbordered pairs of $p\times q$-matrices?
\end{itemize}

\section{Acknowledgments}
The author would like to thank Jeffrey Shallit for suggesting the problem, and for valuable discussions on earlier drafts of this paper.
\bibliographystyle{unsrt}

\bibliography{abbrevs,main}

%\begin{IEEEbiographynophoto}{Daniel Gabric}
%received the B.Comp. (Hons.) and M.Sc. degrees in Computer Science from the University of Guelph in 2016 and 2018, respectively. He is currently a fourth year computer science Ph.D. student at the University of Waterloo. His research interests are in combinatorics on words, combinatorial algorithms, formal languages, and automata theory.
%\end{IEEEbiographynophoto}

\end{document}